\documentclass[
  12pt,
  abstract=true,
  headings=normal,
  a4paper,
]{scrartcl}




\usepackage{amssymb}





\usepackage{amsfonts,amsmath,amssymb,amsthm}
\usepackage{bm,nicefrac}
\usepackage{tabularx}
\usepackage{listings}
\usepackage{enumerate}
\usepackage{float}
\usepackage{graphicx}
\usepackage{comment}
\usepackage{subcaption}
\usepackage{paralist}
\usepackage[ruled,lined,algosection]{algorithm2e}
\usepackage[english]{babel}
\usepackage{verbatim}

\usepackage{amsfonts,amsmath,amssymb,amsthm}
\usepackage{mathtools}
\usepackage{graphicx}
\usepackage{hyperref}
\hypersetup{colorlinks,allcolors=blue}
\usepackage{cancel}
\usepackage{siunitx}
\usepackage{url}
\usepackage{paralist}
\usepackage{adjustbox}
\usepackage{orcidlink}

\newtheorem{lemma}{Lemma}

\newcommand*{\vb}[1]{\mathbf{#1}}
\newcommand*{\vu}[1]{\hat{\mathbf{#1}}}

\newcommand{\myref}[2]{\hyperref[#2]{#1~\ref*{#2}}}

\numberwithin{equation}{section}
\usepackage{matlab-prettifier}

\def\clap#1{\hbox to 0pt{\hss#1\hss}}


\numberwithin{equation}{section}

\usepackage[affil-it]{authblk}

\usepackage{xpatch}
\xpatchcmd{\author}{\relax#1\relax}{\relax\detokenize{#1}\relax}{}{}
\author[\empty]{Yi-Kai Kan\orcidlink{0000-0001-5567-0579}\textsuperscript{1,}\thanks{ykan@bnl.gov  }}
\author[2]{Ji Qiang}
\affil[1]{Brookhaven National Laboratory, Upton, New York, USA}
\affil[2]{Lawrence Berkeley National Laboratory, Berkeley, California, USA}

\addtokomafont{title}{\large}
\begin{document}
\title{On the Electromagnetic Field of a Focusing Charged Particle Beam and Its Two-Dimensional Approximation}
\date{}

\maketitle

\begin{abstract}
The analytical expressions for the electromagnetic potential generated from a focusing charged particle beam are indispensable in various beam physics problems. In this article, we review the theory in detail and point out the necessary assumptions made in the derivation. 
\end{abstract}



\section{Introduction}
The theoretical understanding of the electromagnetic potential generated from a charged particle beam is crucial to many studies, \emph{e.g.}, space-charge and beam-beam interaction. While the physics problem can be fully described by the inhomogeneous electromagnetic wave equation, it is generally not easy to solve when a charge source with a complicated distribution is encountered. Therefore, the quasistatic approximation (model) is widely applied in the beam physics community; the electrostatic field of the particle beam is first solved in the beam's rest-frame, and the corresponding field in the lab-frame is then derived by using the Lorentz transformation. Under the quasistatic approximation, the analytical expression for the electromagnetic field of a rigid Gaussian bunch can be derived and can be found in many literature~\cite{kheifets1976potential,stupakov2018classical,chao2022special}. 

However, in the beam-beam studies, a focusing beam usually needs to be considered, which leads to some critical phenomena, like the longitudinal beam-beam effects~\cite{danilov1992longitudinal,hogan1993longitudinal}. The expression of the electromagnetic potential of a focusing beam is usually given by replacing the constant transverse beam sizes in the three-dimensional (3D) potential of a rigid beam with an $s$-dependent one~\cite{hirata1993symplectic,zhou2022formulae}. However, the legitimation of this ``na\"ive'' generalization has not been rigorously discussed. 

On the other hand, in the beam-beam community, a two-dimensional (2D) formulation is often used for modeling the potential of a particle beam rather than solving the 3D inhomogeneous wave equation; a particle beam is chunked into thin slices along the longitudinal direction, and the field from each slice is calculated by solving the 2D Poisson equation. The 2D potential for a single slice in a Gaussian beam and its closed-form formula have been derived and applied in the literature~\cite{hirata1993symplectic,bassetti1980closed}. This 2D potential should be derivable from the 3D potential under certain approximations. While the parameters characterizing the 2D approximation were discussed in some literature~\cite{kim2011beam-beam, zimmermann1997longitudinal}, how the 2D potential can be derived from the 3D potential was not rigorously demonstrated.
 
In this article, we try to provide a rigorous derivation of the 3D potential from a focusing beam and its 2D approximation. We first formalize the notion of the quasistatic approximation through a formulation based on the inhomogeneous wave equation. After that, based on the discussed quasistatic model, we derive the 3D potential of a Gaussian-distributed focusing beam and its 2D approximation.

\section{The Quasistatic Model}\label{sec:quasistatic}
 The quasistatic model is a commonly used approximation to solve the electromagnetic field of a moving charged particle beam in the beam physics community. One common approach to derive the quasistatic model is based on the Lorentz transformation; the physics problem is first formulated as electrostatic in the beam's frame, and then the full set of governing equations is transformed into the lab frame~\cite{stupakov2018classical}. Despite its popularity in the community, what the quasistatic model actually approximates was seldom discussed. Thus, we first try to formalize the idea of a quasistatic model based on the formulation originally appeared in~\cite{qiang2004parallel} and provide the necessary theoretical reasoning. The tool developed in this section will be a foundation for the discussion in the next section.

The free-space inhomogeneous wave equations for the electric scalar potential $\phi$ and the magnetic vector potential $\vb{A}$ can be expressed as~\cite{jackson1999classical}
\begin{align}
    \nabla^2 \phi(x,y,s,t) - \frac{1}{c^2_0}\frac{\partial^2}{\partial t^2}\phi(x,y,s,t) &= -\frac{\rho(x,y,s,t)}{\varepsilon_0}, \label{eq:wave_phi} \\
    \nabla^2 \vb{A}(x,y,s,t) - \frac{1}{c^2_0} \frac{\partial^2}{\partial t^2}\vb{A}(x,y,s,t) &= -\mu_0 \vb{J}(x,y,s,t) \label{eq:wave_a},
\end{align}
where $\varepsilon_{0}$ is the vacuum permittivity, $\mu_0$ is the vacuum permeability, and $c_0$ is the speed of light in vacuum. Here, the charge density $\rho$ and the current density $\vb{J}$ need to satisfy the continuity equation
\begin{equation}\label{eq:continuity_eq}
    \nabla\cdot\vb{J}(x,y,s,t)  + \frac{\partial }{\partial t}\rho(x,y,s,t) =0,
\end{equation}
and the electromagnetic potentials need to satisfy the Lorenz gauge condition
\begin{equation}
    \nabla\cdot\vb{A} + \frac{1}{c^2_0}\frac{\partial\phi}{\partial t} = 0.
\end{equation}
The quasistatic approximation relies on the following two assumptions:
\begin{compactenum}
    \item Each particle moves almost with the same velocity $v_0$, and the charge and current densities for the particle beam can be modeled as
    \begin{equation*}
        \rho(x,y,s,t) = \rho(x,y,s - v_0 t)\quad\text{and}\quad \vb{J}(x,y,s,t)=v_0 \rho(x,y,s - v_0 t)\vu{s}.
    \end{equation*}
    \item The propagation of the electromagnetic field finishes instantaneously. In this case, given the charge and current densities of the forms $\rho(x,y,s-v_0 t)$ and $\vb{J}(x,y,s-v_0 t)$, the corresponding $\phi$ and $\vb{A}$ satisfying Eq.~\eqref{eq:wave_phi} and Eq.~\eqref{eq:wave_a} can be expressed in the forms $\phi(x,y,s-v_0 t)$ and $\vb{A}(x,y,s-v_0 t)$. The proof is demonstrated in~\myref{Appendix}{sec:quasi_sol_wave_eq}.
\end{compactenum}
Applying the quasistatic approximation, Eq.~\eqref{eq:wave_phi} and Eq.~\eqref{eq:wave_a} become
\begin{align}
    \nabla^2_{\perp}\phi(x,y,z) - \frac{1}{\gamma^2}\frac{\partial^2}{\partial z^2}\phi(x,y,z) &= -\frac{\rho(x,y,z)}{\varepsilon_0}, \label{eq:quasi_wave_phi}\\
    \nabla^2_{\perp} A_s(x,y,z) - \frac{1}{\gamma^2}\frac{\partial^2}{\partial z^2}A_s(x,y,z) &= -\mu_0 v_0 \rho(x,y,z). \label{eq:quasi_wave_a}
\end{align}
Here, the operator $\nabla_{\perp}:=\partial^2/\partial x^2 + \partial^2/\partial y^2$ denotes the two-dimensional Laplacian, and the new variable $z:=s-v_0 t$ stands for the relative position to a reference particle with a trajectory $v_0 t$. After some algebraic manipulations, Eq.~\eqref{eq:quasi_wave_phi} and Eq.~\eqref{eq:quasi_wave_a} reduce to
\begin{equation}\label{eq:simple_relation_phi_a}
\nabla^2_{\perp} (A_s - \frac{v_0}{c^2_0}\phi) + \frac{1}{\gamma^2}\frac{\partial^2}{\partial z^2} (A_s - \frac{v_0}{c^2_0}\phi)=0
\quad\Longrightarrow\quad
A_s=\frac{v_0}{c^2_0}\phi.
\end{equation}
This simple linear relation between $\phi$ and $\vb{A}$ suggests that under the quasistatic approximation, we only need to solve the wave equation for $\phi$, and both electric and magnetic fields can be directly derived from $\phi$. 

\section{The Quasistatic Field of a Focusing Charged Particle Beam}\label{sec:quasi_field_focusing_beam}
Before finding the quasistatic field of a focusing particle beam, we first need to discuss an analytical expression of its charge and current densities. The charge density of a focusing particle beam moving in a constant velocity $v_0$ can be modeled as
\begin{equation}\label{eq:rho_focusing_beam}
    \rho(x,y,s,t)=
    \frac{1}{(2\pi)^{3/2}}\frac{Ne}{\sigma_x(s)\sigma_y(s)\sigma_z}
    \exp\left(-\frac{x^2}{2\sigma^2_x(s)}-\frac{y^2}{2\sigma^2_y(s)}-\frac{(s-v_0 t)^2}{2\sigma^2_z}\right),
\end{equation}
where $N$ is the number of particles and $e$ is the charge of each single particle. 
Here, the horizontal and vertical beam sizes $\sigma_x$ and $\sigma_y$ are defined as
\begin{equation*}\label{eq:trans_bsize}
    \sigma_{i}(s):=\sigma^{*}_i\cdot\left(1+\frac{s^2}{\beta^{*}_{i}}\right)^{1/2}\quad i\in \{x,y\}
\end{equation*}
with $\sigma^{*}_i$ the transverse beam size and $\beta^{*}_i$ the beta function at $s=0$. Throughout this article, we will simply write $\sigma_x$ and $\sigma_y$ without specifying the function argument. It can be checked that Eq.~\eqref{eq:rho_focusing_beam} satisfies $\iiint \rho(x,y,s,t) dxdyds=N e$ at a given time $t$. As each particle moves almost in the velocity $v_0$, it is reasonable to model the longitudinal current density as
\begin{equation}\label{eq:j_focusing_beam}
    J_s(x,y,s,t) = v_0\rho(x,y,s,t).
\end{equation}
Because the transverse beam sizes change with $s$ during the propagation, some transverse current densities ($J_x$ and $J_y$) exist to account for the change in the charge distribution. However, in some scenarios, we may neglect these transverse current densities by claiming $\vert J_x\vert, \vert J_y \vert \ll \vert J_s\vert$. The strength of the current density is proportional to the particle's velocity; and particularly, the particle's transverse velocity is proportional to the rate of change of the beam sizes
\begin{equation*}
    \vert J_i \vert 
    \approx v_0\frac{d\sigma_i}{ds}
    \quad i\in\{x,y\}. 
\end{equation*}
Thus, putting together with $\vert J_s \vert \approx v_0$, we have the following estimation
\begin{equation*}
\left\lvert\frac{J_i}{J_s}\right\rvert
\approx \frac{\sigma^*_i}{\beta^*_i}\frac{\tfrac{s}{\beta^*_i}}{ \sqrt{1+ (\tfrac{s}{\beta^*_i})^2} }
\leq \frac{\sigma^*_i}{\beta^*_i}
= \sqrt{ \frac{\epsilon_i}{\beta^{*}_{i}} }
\quad i\in\{x,y\},
\end{equation*}
where $\epsilon_{i}$ is the transverse emittance of the particle beam. The ratio $\sqrt{\epsilon_i\beta^*_i}$ is the beam divergence and should be small for most particle colliders. In fact, $J_x$ and $J_y$ can be explicitly solved by substituting Eq.~\eqref{eq:rho_focusing_beam} into the continuity equation (Eq.~\eqref{eq:continuity_eq})
\begin{align*}
    &\frac{\partial J_x}{\partial x} + \frac{\partial J_y}{\partial y} + \frac{\partial J_s}{\partial\sigma_x}\frac{\partial\sigma_x}{\partial s} + \frac{\partial J_s}{\partial\sigma_y}\frac{\partial\sigma_y}{\partial s}+ 
    \cancel{ \frac{\partial J_s}{\partial(s-v_0 t)} } = \cancel{ v_0\frac{\rho}{\partial t} } \\
    \Longrightarrow 
    & \frac{\partial J_x}{\partial x} + \frac{\partial J_y}{\partial y} + 
    \left(\frac{x^2}{\sigma^2_x}-1\right)\frac{1}{\sigma_x}\frac{\partial\sigma_x}{\partial s} J_s
    + 
    \left(\frac{y^2}{\sigma^2_y}-1\right)\frac{1}{\sigma_y}\frac{\partial\sigma_y}{\partial s} J_s = 0.
\end{align*}
Hence, the expression for the transverse current densities can be written as
\begin{equation*}
    J_x = \frac{x}{\sigma_x}\frac{\partial\sigma_x}{\partial s}J_s
    \quad\text{and}\quad
    J_y = \frac{y}{\sigma_y}\frac{\partial\sigma_y}{\partial s}J_s.
\end{equation*}

When expressed in the coordinate $(x,y,z,s)$ with $z:=s-v_0 t$, because the charge density still depends explicitly on $s$, the corresponding electromagnetic potential needs to satisfy the wave equations below
\begin{align}
    \nabla^2_{\perp}\phi(x,y,z,s) + \frac{\partial}{\partial s^2}\phi(x,y,z,s) - \frac{1}{\gamma^2}\frac{\partial^2}{\partial z^2}\phi(x,y,z,s) &= -\frac{\rho(x,y,z,s)}{\varepsilon_0}, \label{eq:full_wave_phi_s}\\
    \nabla^2_{\perp} A_s(x,y,z,s) + \frac{\partial}{\partial s^2}A_s(x,y,z,s) - \frac{1}{\gamma^2}\frac{\partial^2}{\partial z^2}A_s(x,y,z,s) &= -\mu_0 v_0 \rho(x,y,z,s). \label{eq:full_wave_a}
\end{align}
By assuming $\phi$ and $A_s$ slowly varying in $s$ (\emph{i.e.}, $\partial^2\phi/\partial s^2 \approx 0$), the simple linear relation between $\phi$ and $A_s$ found in Eq.~\eqref{eq:simple_relation_phi_a} can also be derived 
\begin{equation}\label{eq:simple_relation_full}
    A_s(x,y,z,s)=\frac{v_0}{c^2_0}\phi(x,y,z,s).
\end{equation}
Therefore, it suffices just to solve Eq.~\eqref{eq:full_wave_phi_s}, and the solution can be found by one of the methods used in~\cite{kheifets1976potential,chao2022special}
\begin{equation}\label{eq:phi3d}
    \phi=
    \frac{1}{4\pi\varepsilon_0}\frac{Ne}{(8\pi)^{1/2}}
    \int^{\infty}_{0}
    \frac{
        \exp\left(-\frac{x^2}{2(\sigma^2_x + q)}-\frac{y^2}{2(\sigma^2_y + q)}-\frac{z^2}{2(\sigma^2_z + q/\gamma^2)}\right)
    }{
        (\sigma^2_x + q)^{1/2}(\sigma^2_y + q)^{1/2}(\sigma^2_z + q/\gamma^2)^{1/2}
    }dq.
\end{equation}
Eq.~\eqref{eq:phi3d} is almost the same as the solution for a rigid bunch except that $\sigma_x$ and $\sigma_y$ are now functions of $s$. Applying Eq.~\eqref{eq:simple_relation_full}, the calculation of the electric field can be simplified:
\begin{equation}\label{eq:efields_3d}
    \vb{E}=-\nabla \phi - \frac{\partial \vb{A}}{\partial t}
    \quad\Longrightarrow\quad
    E_x = -\frac{\partial\phi}{\partial x}, 
    E_y = -\frac{\partial\phi}{\partial y},\text{ and } 
    E_s = 
    \overbracket{ -\frac{\partial\phi}{\partial s} }^{ =: E_{ss} }
    \underbracket{ -\frac{1}{\gamma^2}\frac{\partial\phi}{\partial z} }_{ =: E_{sz} },
\end{equation}
where the term $E_{ss}$ is caused by the beam focusing during the propagation, and the term $E_{sz}$ is due to the density variation of particles in $z$-coordinate.

\section{The 2D Approximation}\label{sec:2d_approximation}
To derive the 2D approximation of Eq.~\eqref{eq:phi3d}, we introduce a change of variable \\${q:=\min(\sigma_x,\sigma_y)^2\cdot w/(1-w)}$. For an elliptical beam with $\sigma_y \leq \sigma_x$, we choose \\ ${q=\sigma^2_y\cdot w/(1-w)}$, and Eq.~\eqref{eq:phi3d} becomes
\begin{equation}\label{eq:phi3d_w}
\begin{split}
    &\phi 
    = \frac{1}{4\pi\varepsilon_0}\frac{Ne}{(8\pi)^{1/2}}
    \frac{\sigma_y}{\sigma_x\sigma_z}\times  \\
    &\int^{1}_{0}
    \underbrace{
        \dfrac{\exp\left(-\frac{(1-w)x^2}{2\sigma^2_x(1 - w + A w)}-\frac{(1-w)y^2}{2\sigma^2_y}-\tfrac{(1-w)\gamma^2z^2}{2\gamma^2\sigma^2_z(1 - w + \epsilon w)}\right)}{(1-w)^{1/2}(1-w+ A w)^{1/2}(1-w+\epsilon w)^{1/2}}
    }_{=:\psi_{\epsilon}(w)}dw
\end{split}
\end{equation}
with $\epsilon:=\sigma^2_y/(\gamma^2\sigma^2_z)$ and $A:=\sigma^2_y/\sigma^2_x$. Now, we want to calculate the electric field $E_{x}$, $E_{y}$, $E_{ss}$ and $E_{sz}$ in the limit $\epsilon \to 0$.   
In the calculation for $E_x$ and $E_{y}$, we can interchange the limit and integration 
\begin{align*}
\lim_{\epsilon\to 0} E_x &=\text{const}\cdot\lim_{\epsilon\to 0}\int^{1}_{0} \dfrac{\partial}{\partial x}\psi_{\epsilon}(w)dw 
     = \text{const}\cdot\int^{1}_{0} \lim_{\epsilon\to 0}\dfrac{\partial}{\partial x}\psi_{\epsilon}(w)dw, \\
\lim_{\epsilon\to 0} E_y &= \text{const}\cdot\lim_{\epsilon\to 0}\int^{1}_{0} \dfrac{\partial}{\partial y}\psi_{\epsilon}(w)dw
     = \text{const}\cdot\int^{1}_{0} \lim_{\epsilon\to 0}\dfrac{\partial}{\partial y}\psi_{\epsilon}(w)dw, 
\end{align*}
because $\vert\partial \psi_\epsilon(w)/\partial x\vert$ and $\vert\partial \psi_\epsilon(w)/\partial y\vert$ are smaller than some integrable functions and the dominated convergence theorem can be ultilized~\cite{axler2020measure}. Therefore, we can further write out
\begin{equation}\label{eq:exey_w}
\begin{split}
    \lim_{\epsilon\to 0} E_{x} &=
    \text{const}\cdot \exp\left(-\frac{z^2}{2\sigma^2_{z}}\right) \frac{x}{\sigma^2_x}
    \int^{1}_{0} \dfrac{\exp\left(-\frac{(1-w)x^2}{2\sigma^2_x(1 - w + A w)}-\frac{(1-w)y^2}{2\sigma^2_y}\right)}{(1-w+ A w)^{3/2}}dw,
    \\
    \lim_{\epsilon\to 0} E_{y} &= \text{const}\cdot \exp\left(-\frac{z^2}{2\sigma^2_{z}}\right) \frac{y}{\sigma^2_y}
    \int^{1}_{0} \dfrac{\exp\left(-\frac{(1-w)x^2}{2\sigma^2_x(1 - w + A w)}-\frac{(1-w)y^2}{2\sigma^2_y}\right)}{(1-w+ A w)^{1/2}}dw.
\end{split}
\end{equation}
Using a substitution of $w=q/(q+\sigma^2_y)$ and some algebraic manipulation, Eqs.~\eqref{eq:exey_w} can be expressed in the form
\begin{equation}\label{eq:exey_2d}
    E^{\mathrm{2D}}_x := \lim_{\epsilon\to 0} E_{x} =-\frac{\partial\phi^{\textrm{2D}}}{\partial x}
    \quad\text{and}\quad
    E^{\mathrm{2D}}_y := \lim_{\epsilon\to 0} E_{y} =-\frac{\partial\phi^{\textrm{2D}}}{\partial y}
\end{equation}
Here, we define a 2D scalar potential
\begin{equation}\label{eq:phi2d}
    \phi^{\mathrm{2D}}:=\frac{1}{4\pi\varepsilon_0}\frac{e \cdot \rho_{\parallel}(z)}{2}
    \int^{\infty}_{0}
    \frac{\exp\left(-\frac{x^2}{2(\sigma^2_x + q)}-\frac{y^2}{2(\sigma^2_y + q)}\right)}{(\sigma^2_x + q)^{1/2}(\sigma^2_y + q)^{1/2}}dq,
\end{equation}
with
\begin{equation}
    \rho_{\parallel}(z):=\frac{1}{(2\pi)^{1/2}\sigma_z}\exp\left(-\frac{z^2}{2\sigma^2_z}\right)
\end{equation}
the normalized longitudinal particle density. Actually, Eq.~\eqref{eq:phi2d} is the solution of the two-dimensional Poisson equation
\begin{equation}
    \nabla^2_{\perp}\phi(x,y,z) = -\frac{\rho(x,y,z)}{\varepsilon_0},
\end{equation}
which is a simplification of Eq.~\eqref{eq:quasi_wave_phi} by neglecting the term $\frac{1}{\gamma^2}\frac{\partial^2 \phi}{\partial s^2}$ and is a theoretical foundation of some beam-beam studies, \emph{e.g.}, the strong-strong simulation model~\cite{qiang2004parallel}. For the calculation of $\lim_{\epsilon\to 0}E_{ss}$, we can first observe from Eq.~\eqref{eq:phi3d} that~\cite{zimmermann1997longitudinal}
\begin{equation}
    E_{ss} 
    = -\frac{\partial\phi}{\partial s}
    = -\frac{\partial(\sigma^2_x)}{\partial s}\frac{\partial\phi}{\partial(\sigma^2_x)} -\frac{\partial(\sigma^2_y)}{\partial s}\frac{\partial\phi}{\partial(\sigma^2_y)}
    = -\frac{1}{2}\frac{\partial(\sigma^2_x)}{\partial s}\frac{\partial^2\phi}{\partial x^2} - \frac{1}{2}\frac{\partial(\sigma^2_y)}{\partial s}\frac{\partial^2\phi}{\partial y^2}.
\end{equation}
Hence, we can apply the same machinery in the derivation of Eqs.~\eqref{eq:exey_2d} and eventually get the result 
\begin{equation}
    E^{\mathrm{2D}}_{ss}:=\lim_{\epsilon\to 0}E_{ss} = -\frac{\partial\phi^{\mathrm{2D}}}{\partial s}.
\end{equation}
A set of close-form expressions of $E^{2D}_{x}$, $E^{2D}_{y}$ and $E^{2D}_{ss}$ can be further dervied~\cite{hirata1993symplectic, bassetti1980closed}, and has been widely used in many beam-beam studies~\cite{hirata1993symplectic}.

Before going through a detailed derivation of $\lim_{\epsilon\to 0}E_{sz}$, we first try to guesstimate the answer. The limit $\epsilon=\sigma^{2}_{y}/(\gamma^2\sigma^2_z)\to 0$ implies a particle beam with a very large Lorentz factor or a very long bunch length. We know the electric field lines emanating from a charged particle get compressed in the transverse direction due to the Lorentz contraction~\cite{jackson1999classical}. In the highly relativistic limit $\gamma\to\infty$, the longitudinal field of each single particle in a bunch approaches zero, and the same thing also holds for the field of the whole bunch as it is just the superposition of single-particle fields. To discuss the case of an extremely long bunch, we consider a particle beam with a Gaussian-distributed density $\rho$ and a bunch length $\sigma_z$. Given an arbitrary location $z=z_0$ and a length $l>0$, we have $\rho(z=z_0 - l) \approx \rho(z=z_0 + l)$ as $\sigma_z\to\infty$; and hence, the longitudinal electric fields generated from $\rho(z=z_0 - l)$ and $\rho(z=z_0 + l)$ cancel at $z=z_0$ for all $l$. 

In deriving $\lim_{\epsilon\to 0}E_{sz}$, if limit and integration are arbitrarily interchanged, we may get a result
\begin{align}
    \lim_{\epsilon\to 0}E_{sz} 
    =& \text{const}\cdot\lim_{\epsilon\to 0}\int^{1}_{0} \dfrac{\partial}{\partial z}\psi_{\epsilon}(w)dw \label{eq:sz_interchange_lim_int}\\
    \stackrel{?}{=}& \text{const}\cdot\int^{1}_{0} \lim_{\epsilon\to 0}\dfrac{\partial}{\partial z}\psi_{\epsilon}(w)dw  \\
    =& \text{const}\cdot \exp\left(-\frac{z^2}{2\sigma^2_{z}}\right) \frac{z}{\sigma^2_z}
    \int^{1}_{0} \dfrac{\exp\left(-\frac{(1-w)x^2}{2\sigma^2_x(1 - w + A w)}-\frac{(1-w)y^2}{2\sigma^2_y}\right)}{(1-w+ A w)^{1/2}(1-w)}dw \label{eq:esz_div}\\
    =& -\frac{1}{\gamma^2}\frac{\partial\phi^{\mathrm{2D}} }{\partial z}\quad \text{(substitute with } w=q/(q+\sigma^2_y)\text{)}.\label{eq:esz_2d_wrong}
\end{align}
However, Eq.~\eqref{eq:esz_2d_wrong} diverges because the integrand in Eq.~\eqref{eq:esz_div} is bigger than a function $f(w):=\exp(-\tfrac{x^2}{2\sigma^2_x}-\tfrac{y^2}{2\sigma^2_y})/(1-w)$ on the interval $[0,1]$, and the integral $\int^{1}_{0}f(w)dw$ diverges. This means that the result of Eq.~\eqref{eq:esz_2d_wrong} and our previous guestimation contradict each other. This contradiction arises because we cannot find an integrable function that is bigger than $\vert\partial \psi(w)_{\epsilon}/\partial z\vert$; and hence, we are not allowed to interchange integration and limit in Eq.~\eqref{eq:sz_interchange_lim_int}. To find out the right answer, we first write down $E_{sz}$ explicitly
\begin{equation}\label{eq:phi_2d_w}
\begin{split}
    E_{sz} 
    =& \frac{1}{4\pi\varepsilon_0}\frac{Ne}{(8\pi)^{1/2}}\frac{\sigma_y}{\gamma^2\sigma_x\sigma^2_z}\frac{z}{\sigma_z} \times \\
    &
    \underbrace{
    \int^{1}_{0}
    \dfrac{
        (1-w)^{1/2}\exp\left(-\frac{(1-w)x^2}{2\sigma^2_x(1 - w + A w)}-\frac{(1-w)y^2}{2\sigma^2_y}-\tfrac{(1-w)\gamma^2z^2}{2\gamma^2\sigma^2_z(1 - w + \epsilon w)}\right)
    }{
        (1-w+ A w)^{1/2}(1-w+\epsilon w)^{3/2}
    }dw.
    }_{=:I_{\epsilon}}
\end{split}
\end{equation}
With a new variable $u:=(1-w)/\epsilon$ defined, an upper bound of $I_{\epsilon}$ can be derived
\begin{align}
    I_{\epsilon} \leq &\int^{1/\epsilon}_{0} \frac{u^{1/2}}{ (A+ (1 - A) \epsilon u )^{1/2} \cdot(1+(1-\epsilon)u)^{3/2}}dw \\
      \leq &\frac{1}{(1-A)^{1/2}\cdot\epsilon^{1/2}}\int^{1/\epsilon}_{0} \frac{1}{ (1+(1-\epsilon)u)^{3/2}}du \label{eq:I_2}\\
          =& \frac{2}{(1 - A)^{1/2}\cdot \epsilon^{1/2}} \frac{1-\epsilon^{1/2}}{1-\epsilon} \label{eq:I_3}.
\end{align}
In deriving Eq.~\eqref{eq:I_2}, we apply an inequality $1/(A+ (1 - A) \epsilon u) \leq 1/( (1 - A) \epsilon u )$, which is true for $u > 0$ because $\sigma_y/\sigma_x \leq 1 $ was assumed at the beginning, and $(1-A) \geq 0 $. Combining Eq.~\eqref{eq:phi_2d_w} and Eq.~\eqref{eq:I_3}, we conclude that the magnitude of $E_{sz}$ is bounded by
\begin{equation*}
    \vert E_{sz} \vert 
    \leq
    \frac{1}{4\pi\varepsilon_0}\frac{Ne}{(8\pi)^{1/2}}\frac{2}{\sigma_x\sigma_y}
    \frac{\vert z\vert}{\sigma_z}
    \frac{1}{1-\sigma^2_y/\sigma^2_x}
    \frac{\epsilon^{1/2}}{1+\epsilon^{1/2}},
\end{equation*}
and $\vert E_{sz}\vert \to 0$ as $\epsilon\to 0$. Therefore, we have $E^{2D}_{sz}:=\lim_{\epsilon\to 0} E_{sz} = 0$.




\section*{Acknowledgement}
This work is supported by the U.S. DOE under contract NO. DE-AC02-05CH11231.

\appendix
\section{A Property on the Solution of the Inhomogeneous Wave Equation}\label{sec:quasi_sol_wave_eq}
\begin{lemma}\label{thm:quasi_sol_wave_eq}
Assume the electromagnetic wave follows instantaneous propagation; in other words, every signal arrives at the observer right after its generation from the source without any retardation. Given a time-dependent charge source of the forms $\rho(x,y,s-v_0 t)$ and $\vb{J}(x,y,s-v_0 t)$, the solutions to the inhomogeneous wave equations for the scalar and vector potentials can be expressed in the forms $\phi(x,y,s-v_0 t)$ and $\vb{A}(x,y,s-v_0 t)$.
\end{lemma}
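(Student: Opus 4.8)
The plan is to treat the statement as an instance of a symmetry-and-uniqueness principle: the source is invariant under a one-parameter family of space-time translations, the wave operator commutes with those translations, and the hypothesis of instantaneous propagation singles out a \emph{unique} solution, which must therefore inherit the same invariance. Concretely, I would introduce the translations $T_\tau:(s,t)\mapsto(s+v_0\tau,\,t+\tau)$ for $\tau\in\mathbb{R}$, leaving $x,y$ fixed. The key observation is that the combination $s-v_0 t$ is precisely the invariant of this group: a function is of the form $f(x,y,s-v_0 t)$ \emph{if and only if} it is fixed by every pullback $T_\tau^{\ast}$. Proving the lemma thus reduces to showing that the selected potentials satisfy $T_\tau^{\ast}\phi=\phi$ and $T_\tau^{\ast}\vb{A}=\vb{A}$.

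First I would record the two invariances that make the machinery run. The source is invariant by hypothesis, since $\rho\big(x,y,(s+v_0\tau)-v_0(t+\tau)\big)=\rho(x,y,s-v_0 t)$, and the same holds componentwise for $\vb{J}(x,y,s-v_0 t)$ (in particular $J_s=v_0\rho$). The d'Alembert operator $\nabla^2-\tfrac{1}{c_0^2}\partial_t^2$ of Eqs.~\eqref{eq:wave_phi}--\eqref{eq:wave_a} has constant coefficients and differentiates only in directions moved by $T_\tau$, so by the chain rule it commutes with $T_\tau^{\ast}$. Hence if $\phi$ solves Eq.~\eqref{eq:wave_phi}, then $T_\tau^{\ast}\phi$ solves the identical equation with the identical (invariant) source, and likewise for each Cartesian component of $\vb{A}$ in Eq.~\eqref{eq:wave_a}.

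The second step pins the solution down. Under instantaneous propagation the retardation is switched off, so the prescribed solution is the non-retarded kernel integral $\phi(x,y,s,t)=\frac{1}{4\pi\varepsilon_0}\int \rho(x',y',s'-v_0 t)\,R^{-1}\,dx'dy'ds'$ with $R=\sqrt{(x-x')^2+(y-y')^2+(s-s')^2}$, and analogously $\vb{A}=\frac{\mu_0}{4\pi}\int \vb{J}(x',y',s'-v_0 t)\,R^{-1}\,dx'dy'ds'$. Substituting $\zeta'=s'-v_0 t$ at fixed $t$ turns the kernel into $R=\sqrt{(x-x')^2+(y-y')^2+((s-v_0 t)-\zeta')^2}$ and the source profile into $\rho(x',y',\zeta')$, so that $s$ and $t$ appear only through $z=s-v_0 t$. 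This both displays the claimed form directly and verifies $T_\tau^{\ast}\phi=\phi$; equivalently, uniqueness of the selected solution lets one conclude from Step~1 that $\phi$ and $T_\tau^{\ast}\phi$ coincide.

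The crux, and the only place the hypothesis does genuine work, is the \emph{selection of a unique solution}. The wave equation alone admits an affine space of solutions differing by homogeneous waves, and a generic added wave need not respect $T_\tau$; including one would destroy the desired form, so translational invariance of the operator and source is not by itself enough. The instantaneous-propagation assumption resolves this by fixing the non-retarded Green's-function integral, which is both uniquely determined and manifestly translation-covariant. I expect this conceptual point about uniqueness to be the main obstacle to state cleanly; by contrast, the remaining work --- justifying the change of variables and noting that the argument applies verbatim to each component of $\vb{A}$ --- is routine.
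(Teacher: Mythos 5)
Your proposal is correct and, at its core, takes the same route as the paper: the decisive step --- writing the non-retarded Green's-function integral with $t'=t$ and substituting $s''=s'-v_0 t$ so that $s$ and $t$ enter only through $s-v_0 t$ --- is exactly the computation in the paper's proof, applied verbatim to each component of $\vb{A}$. The surrounding translation-invariance and uniqueness discussion is sound but is packaging around that same calculation rather than a genuinely different argument.
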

\begin{proof}
We only show the case for $\phi$, and the case for $\vb{A}$ can be done via the same machinery. The retarded solution of the inhomogeneous wave equation for $\phi$ is~\cite{jackson1999classical}
\begin{equation}\label{eq:phi_retarded_form}
\phi(x,y,s,t) =\frac{1}{4\pi\varepsilon_0}\iiint g(x-x',y-y',s-s')\cdot\rho(x',y',s',t')dx'dy'ds' \\ 
\end{equation}
with a Green's function 
\begin{equation}
    g(x-x',y-y',s-s') = 1/\sqrt{(x-x')^2 + (y-y')^2 + (s-s')^2}.
\end{equation}
Here, $t'$ satisfies the retardation condition 
\begin{equation}
    t'=t - \sqrt{(x-x')^2 + (y-y')^2 + (s-s')^2}/c_0.
\end{equation}
Substituting the source term in Eq.~\eqref{eq:phi_retarded_form} with $\rho(x',y',s'-v_0 t')$ and applying the assumption $t'=t$, we can prove the conclusion
\begin{align*}
    \phi(x,y,s,t) 
    &=\frac{1}{4\pi\varepsilon_0}\iiint g(x-x',y-y',s-s')\cdot\rho(x',y',s'-v_0 t)dx'dy'ds' \\
    &=\frac{1}{4\pi\varepsilon_0}\iiint g(x-x',y-y',s-v_0 t -s'')\cdot\rho(x',y',s'')dx'dy'ds'' \quad (s'':=s'-v_0 t) \\
    &=\frac{1}{4\pi\varepsilon_0}\iiint g(x-x',y-y', (s-v_0 t) -s'')\cdot\rho(x',y',s'')dx'dy'ds''  \\
    &=:f(x,y,s-v_0 t).
\end{align*}
\end{proof}


\bibliographystyle{elsarticle-num}
\bibliography{references}







\end{document}